\def\Beweisende{\square}            
\def\BewEnde{\hfill{\Beweisende}}
\def\RR{{\mathbb R}}
\def\Vkt#1{{\mathbf #1}}
\newcommand{\go}[1]{{\sf #1}}
\begin{document}

\title*{Isometrically deformable cones and cylinders carrying planar curves}

\author{Georg Nawratil}
\institute{
	Institute of Discrete Mathematics and Geometry \&  
	Center for Geometry and Computational Design, TU Wien 
}

%
%
\maketitle

\abstract{We study cones and cylinders with a 1-parametric isometric deformation carrying at least two planar curves, 
which remain planar during this continuous flexion and are located in non-parallel planes. 
We investigate this geometric/kinematic problem in the smooth and the discrete setting, as it 
is the base for a generalized construction of so-called T-hedral zipper tubes.  
In contrast to the cylindrical case, which can be solved easily, the conical one is more tricky, but we succeed to give 
a closed form solution for the discrete case, which is used to prove that these cones correspond to caps of Bricard octahedra of the 
plane-symmetric type. 
For the smooth case we are able to reduce the problem by means of symbolic computation to an 
ordinary differential equation, but its solution remains an open problem.
}
\keywords{Isometric deformations, Cones, Cylinders, Bricard octahedra, Zipper tubes}

\section{Motivation and introduction}\label{sec:intro}

The work is motivated by the study of so-called {\it zipper tubes}, which were introduced by Filipov, Tachi and Paulino \cite{pnas}. 
This term always refer to a pair of discrete tubes (in most examples both tubes are congruent) consisting of planar quads where each vertex has valence four. 
Each of these tubes has a one-parametric isometric deformation (also known as rigid-foldability). Moreover, the tubes are glued together along a row of faces, the 
so-called {\it zip row}, and the resulting zipper structure is still continuous flexible. 

In \cite{KMN1} it is realized that most of the known zipper structures consist of tubes, which  
belong to a special class of rigid-foldable quad-surfaces known as T-hedra (cf.\ \cite{graf,sauer,SNRT2021,IRT2022}). 
This perspective leads to a deeper geometric understanding of the underlying zipper principle, which also made it possible to contemplate the construction of 
generalized T-hedral zipper tubes not only in a discrete setting but in a smooth one as well. 
Moreover, in \cite{KMN1} it is shown that this generalized construction reduces to the solution of the following
geometric/kinematic problem studied in this \medskip paper. 

\noindent
{\bf Geometric/kinematic Problem.} {\it Determine all discrete/smooth cones $\Lambda$ and cylinders $\Gamma$, respectively, allowing a 1-parametric isometric deformation $\iota$ in a way 
that at least two planar curves $\go a$ and $\go b$ exist on $\Lambda$  and $\Gamma$, respectively, which remain planar under $\iota$. 
Moreover, we assume that the two carrier planes $\alpha$ and $\beta$ of $\go a$ and $\go b$, respectively, are not \medskip parallel (cf.\ Fig.\ \ref{fig1}a).}

\begin{figure}[t]
\begin{center}
\begin{overpic}
    [height=40mm]{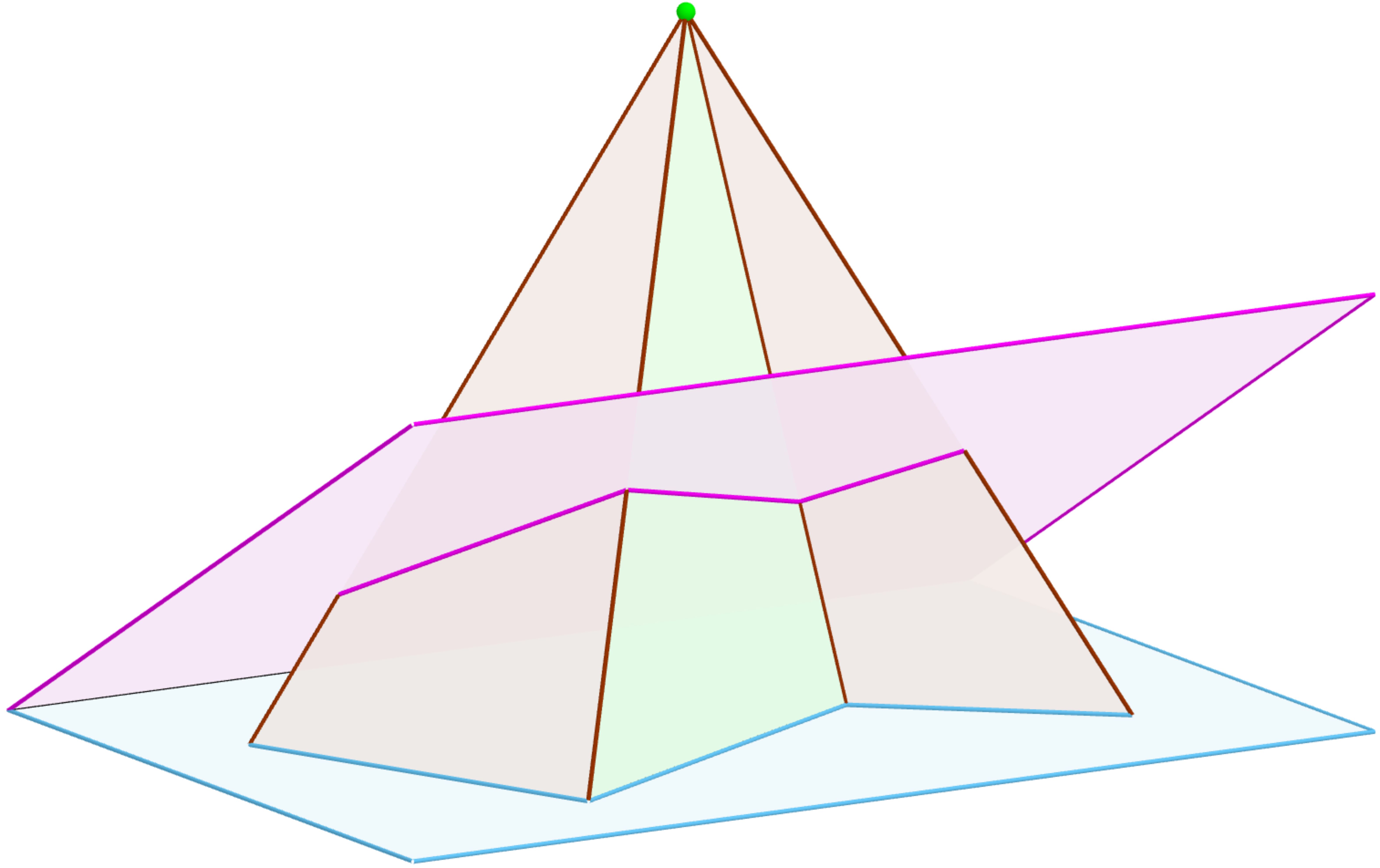}
\begin{scriptsize}
\put(0,0){a)}
\put(89,37){$\alpha$}
\put(88,10){$\beta$}
\put(46,60.5){$V$}
\put(31,8){$b$}
\put(34,21){$a$}
\put(42,45){$f_1$}
\put(49,45){$f_2$}
\put(55,45){$f_3$}

\end{scriptsize}     
  \end{overpic} 
\,
\begin{overpic}
    [height=40mm]{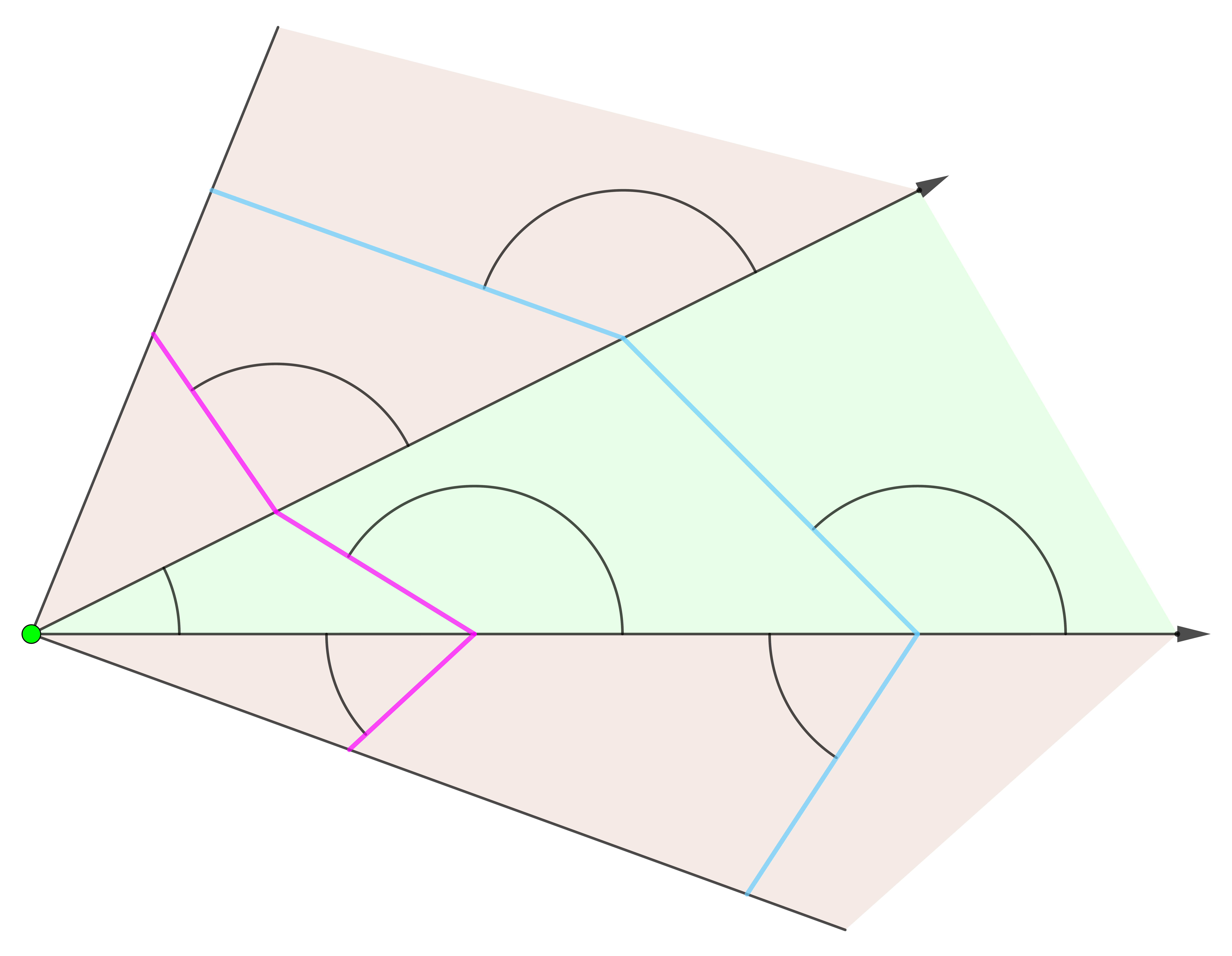}
\begin{scriptsize}
\put(0,0){b)}
\put(34,19){$a_1$}
\put(68,12){$b_1$}
\put(28,22){$\sigma_1$}
\put(65.5,21){$\tau_1$}
\put(0,21){$V$}
\put(38,30){$\sigma_2$}
\put(75,30){$\tau_2$}
\put(80,17){$f_1$}
\put(23.5,30){$a_2$}
\put(10.5,27.5){$\mu$}
\put(78.5,45){$f_2$}
\put(61,41){$b_2$}
\put(95,22){$r_1$}
\put(29,59){$b_3$}
\put(36,67){$f_3$}
\put(48,55){$\tau_3$}
\put(22,41){$\sigma_3$}
\put(13.5,40.5){$a_3$}
\put(77,60){$r_2$}
\end{scriptsize}     
  \end{overpic} 
\end{center}	
\caption{
(a) Projective sketch of the discrete case of the considered geometric/kinematic problem; i.e.\ if $V$ belongs to the ideal plane then the faces 
$f_1,f_2,f_3$ belong to a discrete cylinder $\Gamma$ otherwise to a discrete cone $\Lambda$.
(b) Sketch of the development of the faces $f_1,f_2,f_3$ into the plane.}
  \label{fig1}
\end{figure}

\subsection{Preliminary considerations and outline}\label{sec:preliminary}

The considerations within this subsection hold for the discrete and the smooth case.
Let us assume that we have found a solution to the geometric/kinematic problem stated above, then 
planes parallel to $\alpha$ (resp.\ $\beta$) also carry curves of $\Lambda$ and $\Gamma$, respectively, 
which remain planar during $\iota$. This holds true, as the intersection curve of a parallel plane to $\alpha$ (resp.\ $\beta$) 
results from $\go a$ (resp.\ $\go b$) by a 
\begin{enumerate}
\item
central scaling with center in the vertex  $V$ of $\Lambda$ in the conical case,
\item
translation along the ruling direction of $\Gamma$ in the cylindrical case.
\end{enumerate}
Therefore, we assumed that the planes $\alpha$ and $\beta$ are not parallel.

Moreover, we can produce further planar curves $\go c$ on $\Lambda$ remaining planar under $\iota$, by 
assigning an arbitrary real value to the cross-ratio $(\go a,\go b,\go c,V)$, which has to be evaluated along every ruling through the vertex $V$ of $\Lambda$. 
For the cylindrical case $V$ is an ideal point of $\Gamma$ and the cross-ratio reduces to the ratio $(\go a,\go b,\go c)$. 
Due to the linearity of this construction the obtained curve $\go c$ is again planar and its carrier plane $\gamma$ is contained within the 
pencil of planes spanned by $\alpha$ and $\beta$.
This already shows that there exists at least a one-parametric set of plane orientations implying planar cuts of $\Lambda$ and $\Gamma$, respectively, 
whose planarity is kept under $\iota$.

The above considerations also show that for the cylindrical case one can always assume without loss of generality that  $\beta$
is orthogonal to the ruling direction. As a consequence $\go b$ remains planar under any isometric deformation of $\Gamma$, which 
keeps $\Gamma$ a cylinder through the fixed ideal point $V$.  
Thus the problem gets more relaxed and one just has to compute $\iota$ in a way that $\go a$ remains planar. The corresponding computations 
are straightforward and are given for reasons of completeness in the Appendix (see Section \ref{sec:discrete_cylinder} for the discrete setting 
and Section \ref{sec:smooth_cylinder} for the smooth one). The remainder of the paper is structured as follows: 
The discrete conical case is studied in Section \ref{sec:discrete} and the smooth one in Section \ref{sec:smooth}.  
We conclude the paper in Section \ref{sec:conclusion}.



\section{Discrete conical case}\label{sec:discrete}

For the study of the discrete case it is sufficient to consider three adjacent faces $f_1,f_2,f_3$ of the discretized cone, where the dihedral angle 
enclosed by $f_i$ and $f_{i+1}$ is denoted by $\delta_i$ for $i=1,2$. The corresponding rotation axis is denoted by $r_i$. 
The intersection point of $r_1$ and $r_2$ is the vertex $V$ and the angle enclosed by $r_1$ and $r_2$ is denoted by $\mu$ (cf.\ Fig.\ \ref{fig1}b).

By intersecting two non-parallel planes $\alpha$ and $\beta$ with these three faces, we obtain two planar polygons $a_1,a_2,a_3$ and $b_1,b_2,b_3$, respectively. 
There exists an 1-parametric rigid-folding of the three faces in a way that $a_1,a_2,a_3$ keep their coplanarity. 
Under which geometric conditions also $b_1$, $b_2$ and $b_3$ remain coplanar under this isometric deformation is figured out in the remainder of this section.

\subsection{Setting up the equations}

We consider $f_2$ as fixed and locate it in the $xy$-plane of the reference frame in such a 
way that $V$ equals the origin and that $r_1$ coincides with the $x$-axis. 
Moreover, the direction $A_i$ of the edges $a_i$ can be parametrized as follows:
\begin{equation}
A_3:=(\cos{(\mu+\sigma_3)},\sin{(\mu+\sigma_3)},0)^T, \quad
A_i:=(\cos{\sigma_i},\sin{\sigma_i},0)^T, 
\end{equation} 
where $\sigma_i$ denotes the angle enclosed by $a_i$ and $r_1$ (for $i=1,2$) and $\sigma_3$ denotes the 
angle enclosed by $a_3$ and $r_2$. 
Now we rotate $A_1$ by the angle $\delta_1$ about the $r_1$-axis, which yields 
$A_1^*:=\Vkt R_1 A_1$ with
\begin{equation}
\Vkt R_1:=
\begin{pmatrix}
1 & 0 & 0 \\
0 & \cos{\delta_1} & -\sin{\delta_1} \\
0 & \sin{\delta_1} &  \cos{\delta_1} 
\end{pmatrix}.
\end{equation}
In a similar way we can rotate $A_3$ by the angle $\delta_2$ about the $r_2$-axis, which yields $A_3^*:=\Vkt R_2 A_3$ with
\begin{equation}
\Vkt R_2:=
\begin{pmatrix}
(1-\cos{\mu}^2)\cos{\delta_2}+\cos{\mu}^2    & \cos{\mu}\sin{\mu}(1-\cos{\delta_2}) & \sin{\mu}\sin{\delta_2}\\
\cos{\mu}\sin{\mu}(1-\cos{\delta_2}) & (1-\sin{\mu}^2)\cos{\delta_2}+\sin{\mu}^2 & -\cos{\mu}\sin{\delta_2} \\
-\sin{\mu}\sin{\delta_2} & \cos{\mu}\sin{\delta_2} &  \cos{\delta_2} 
\end{pmatrix}.
\end{equation}
Now the coplanarity of the edges $a_1,a_2,a_3$ can be expressed by $D_1=0$ with
\begin{equation}\label{con1}
D_1:=\det(A_1^*,A_2,A_3^*).
\end{equation}
Exactly the same procedure can be done with respect to the directions $B_i$ of the edges $b_i$, which are given by:
\begin{equation}
B_3:=(\cos{(\mu+\tau_3)},\sin{(\mu+\tau_3)},0)^T, \quad
B_i:=(\cos{\tau_i},\sin{\tau_i},0)^T, 
\end{equation} 
for $i=1,2$.
Then the coplanarity of the edges $b_1,b_2,b_3$ equals the condition $D_2=0$ with
\begin{equation}\label{con2}
D_2=\det(B_1^*,B_2,B_3^*)
\end{equation}
and $B_1^*:=\Vkt R_1 B_1$ and $B_3^*:=\Vkt R_2 B_3$. 

In order to convert the conditions of Eqs.\ (\ref{con1}) and (\ref{con2}) into algebraic ones, we use the half-angle substitution; i.e.\
\begin{equation}
d_i:=\tan{\tfrac{\delta_i}{2}}, \quad s_j:=\tan{\tfrac{\sigma_j}{2}},\quad t_j:=\tan{\tfrac{\tau_j}{2}}, \quad  m:=\tan{\tfrac{\mu}{2}}
\end{equation}
for $i=1,2$ and $j=1,2,3$.


\subsection{Excluded cases}\label{sec:special}

It is not allowed that 
\begin{enumerate}[$\bullet$]
\item
$A_1$ (resp.\ $B_1$) is in direction of $r_1$,
\item
$A_3$ (resp.\ $B_3$) is in direction of $r_2$,
\item
$A_2$ (resp.\ $B_2$) is in direction of $r_1$ or $r_2$,
\end{enumerate} 
which imply $s_1s_3s_2(s_2-m)\neq 0$ (resp.\ $t_1t_3t_2(t_2-m)\neq 0$). 
But also the opposite directions are not allowed which results in the condition 
$\tfrac{ms_2+1}{s_1s_3s_2}\neq 0$ (resp.\ $\tfrac{mt_2+1}{t_1t_3t_2}\neq 0$).

Moreover, a special case arise if $A_1$ and $A_2$ (resp.\  $B_1$ and $B_2$) are identical up to orientations. A necessary condition for that 
is that the carrier planes $f_1$ and $f_2$ of the edges $a_1$ and $a_2$ (resp.\ $b_1$ and $b_2$) coincide which is the case for 
\begin{enumerate}
\item
$\delta_1=0$ ($\Leftrightarrow$ $d_1=0$): 
Then either $\sigma_1=\sigma_2$ (resp.\ $\tau_1=\tau_2$) has to hold which equals the condition $s_1-s_2=0$ (resp.\ $t_1-t_2=0$) 
or $\sigma_1=\sigma_2+\pi$ (resp.\ $\tau_1=\tau_2+\pi$) has to hold which is equivalent to $s_1s_2+1=0$ (resp.\ $t_1t_2+1=0$).
\item
$\delta_1=\pi$ ($\Leftrightarrow$ $d_1=\infty$):
Then either $\sigma_1=-\sigma_2$ (resp.\ $\tau_1=-\tau_2$) has to hold which equals the condition $s_1+s_2=0$ (resp.\ $t_1+t_2=0$) 
or $\sigma_1=-\sigma_2+\pi$ (resp.\ $\tau_1=-\tau_2+\pi$) has to hold which is equivalent to $s_1s_2-1=0$ (resp.\ $t_1t_2-1=0$).
\end{enumerate}

The same considerations can be done for the case that $A_2$ and $A_3$ (resp.\  $B_2$ and $B_3$) are identical up to orientations. A necessary condition for that 
is that the carrier planes $f_2$ and $f_3$ of the edges $a_2$ and $a_3$ (resp.\ $b_2$ and $b_3$) coincide which is the case for $\delta_2=0$ or $\delta_2=\pi$. 
Similar two cases as above can be considered, but there is no need of an explicit discussion for the remaining study.


\subsection{Elimination steps}

From the two conditions $D_1=0$ and $D_2=0$ we can eliminate $d_1$ by means of resultant and obtain a condition of the form
$E_4d_2^4+E_2d_2^2+E_0=0$ where $E_0,E_2,E_4$ are functions in $m,s_j,t_j$ with $j=1,2,3$. 
In order that for all $d_2\in\RR$ a value for $d_1$ exists such that $D_1=0$ and $D_2=0$ hold, the coefficients with respect to $d_2$ have to vanish, which implies 
the  three algebraic equations:
\begin{equation}
E_4=0, \quad E_2=0, \quad E_0=0. 
\end{equation}
In a first step we eliminate $s_2$ by means of resultant, which can be done in three ways:
\begin{equation}
F_0:=Res(E_2,E_4,s_2), \quad 
F_2:=Res(E_0,E_4,s_2), \quad 
F_4:=Res(E_0,E_2,s_2).
\end{equation} 
These three expressions have a greatest common divisor (gcd) of the form 
\begin{equation}\label{eq:G}
G:= 2^8 s_1^8 s_3^8 t_3^4 (t_1-t_2)^2(t_1t_2+1)^2(t_1+t_2)^2(t_1t_2-1)^2.
\end{equation}
For the expressions in the brackets one can easily check by back-substitution that they imply 
exactly the special cases mentioned in item 1 and item 2 of Sec.\ \ref{sec:special}. 
Therefore we can only consider the expressions $G_k:=F_k/G$ for $k=0,2,4$. 

Based on these three expressions we can eliminate again by means of resultant the variable $s_1$; i.e.
\begin{equation}
H_0:=Res(G_2,G_4,s_1), \quad 
H_2:=Res(G_0,G_4,s_1), \quad 
H_4:=Res(G_0,G_2,s_1).
\end{equation} 
Then the solution has to be contained in the gcd of $H_0,H_2,H_4$, which equals:
\begin{equation}
\begin{split}
2^{256}m^{128}(m^2+1)^{32}t_2^{128}t_3^{48}(t_1-t_2)^8(t_1t_2+1)^8(t_1+t_2)^8(t_1t_2-1)^8 \\
(t_2-m)^{128}(mt_2+1)^{128}(s_3-t_3)^{128}(s_3t_3+1)^{128}T_1^{16}T_2^{16}
\end{split}
\end{equation}
with
\begin{align}
T_1&:=t_2(t_1 + t_3)(t_1t_3 + 1)m^2 - t_1(t_3^2 + 1)(t_2^2 - 1)m - t_2(t_1-t_3)(t_1t_3 - 1), \\
T_2&:=t_2(t_1-t_3)(t_1t_3 - 1)m^2 + t_1(t_3^2 + 1)(t_2^2 - 1)m + t_2(t_1 + t_3)(t_1t_3 + 1).
\end{align}
We only have to discuss the last four factors in detail as $m$ has to be a real number different from zero and the other factor
were already identified to belong to special cases. 
The factors $(s_3-t_3)$ and $(s_3t_3+1)$ belong to the trivial solution that the planes $\alpha$ and $\beta$ are parallel which can be seen as follows:
Back-substitution  of $s_3=t_3$ and $s_3=-1/t_3$, respectively,  into $G_0,G_2,G_4$ shows that the resulting expressions can only vanish for 
\begin{equation}
t_2^8t_3^8(s_1-t_1)^8(s_1t_1+1)^8(m^2+1)^8=0
\end{equation}
Back-substitution  of both possible solutions $s_1=t_1$ and $s_1=-1/t_1$, respectively, into $E_0,E_2,E_4$ shows that the resulting expressions can only vanish for 
\begin{equation}
t_1^2t_3^2(s_2-t_2)^2(s_2t_2+1)^2=0. 
\end{equation}
The possible solutions $s_2=t_2$ and $s_2=-1/t_2$, respectively, imply that $\alpha$ and $\beta$ are parallel.

As a consequence only $T_1T_2=0$ can imply a non-trivial solution to our problem. 
Interestingly both conditions are independent of $s_3$; i.e.\ only depend on $m,t_1,t_2,t_3$. This means if this condition is fulfilled then 
one can choose $s_3$ arbitrarily,  which shows that a one-parametric set of planar slices remains planar during the one-parametric rigid-folding of the 
zipper strip (cf.\ Section \ref{sec:preliminary}).

 \subsection{Closed form solution}

The classical method to obtain a closed form solution is to solve $T_1=0$ or $T_2=0$ and substitute the obtained expressions back into $G_0,G_2,G_4$ to compute 
the expression for $s_1$ and substitute the obtained expression again back into $E_0,E_2,E_4$ to get the expression for $s_2$. 
Then the final back-substitution into $D_1=0$ and $D_2=0$ yields the relation between $d_1$ and $d_2$. 
We found a more elegant way, which also yields compact expression, and is based on the following idea: 

As none of the planes $\alpha$ and $\beta$ is privileged the analog condition to $T_1T_2=0$ also has to hold with respect to the $s_i$'s; i.e. $S_1S_2=0$  with 
\begin{align}
S_1&:=s_2(s_1 + s_3)(s_1s_3 + 1)m^2 - s_1(s_3^2 + 1)(s_2^2 - 1)m - s_2(s_1-s_3)(s_1s_3 - 1) \\
S_2&:=s_2(s_1-s_3)(s_1s_3 - 1)m^2 + s_1(s_3^2 + 1)(s_2^2 - 1)m + s_2(s_1 + s_3)(s_1s_3 + 1) \label{eq:ref}
\end{align}
Therefore we obtained two condition $S_1S_2=0$ and $T_1T_2=0$ from the initial system of three equations $E_0=E_2=E_4=0$. 
Thus only one more condition is missing which can be computed in the following way.
We eliminate from $E_i$ ($i=0,2,4$) and $S_u$ ($u=1,2$) the unknown $s_2$ by means of resultant, which yields 
$J_{i,u}:=Res(E_i,S_u,s_2)$. From this expression and $T_v$ ($v=1,2$) we eliminate $t_2$ by computing 
$K_{i,u,v}:=Res(J_{i,u},T_v,t_2)$. 

Then the gcd 
of $K_{0,u,v}$, $K_{2,u,v}$ and $K_{4,u,v}$, which reads as follows:
\begin{equation}
2^8s_1^8s_3^8t_1^8t_3^8m^8(m^2+1)^8M_{u,v}^4N_{u,v}^4,
\end{equation}
has to contain the looked for condition. 
For the four different possibilities the factors $M$ and $N$ are given by:
\begin{align}
M_{1,1}=M_{2,2}&=s_1s_3t_1 - s_1s_3t_3 -s_1t_1t_3 + s_3t_1t_3 - s_1 + s_3 + t_1 - t_3, \label{eq:1} \\
N_{1,1}=N_{2,2}&=s_1s_3t_1 + s_1s_3t_3 -s_1t_1t_3 - s_3t_1t_3 + s_1 + s_3 - t_1 - t_3, \label{eq:2} \\
M_{1,2}=M_{2,1}&=s_1s_3t_1t_3 + s_1s_3 + s_1t_1 - s_1t_3 - t_1s_3 + s_3t_3 + t_1t_3 + 1, \label{eq:3} \\
N_{1,2}=N_{2,1}&=s_1s_3t_1t_3 - s_1s_3 + s_1t_1 + s_1t_3 + t_1s_3 + s_3t_3 - t_1t_3 + 1. \label{eq:4}
\end{align}
%
%
The fulfillment of the conditions $S_1S_2=0$ (resp.\ $T_1T_2=0$) already implies that $D_1$ (resp.\ $D_2$) splits up into several factors. 
This can be seen by computing the resultant of $D_1$ and $S_u$ with respect to $s_2$ (resp.\ $D_2$ and $T_v$ with respect to $t_2$) 
which yields $O_{s,u}:=2^6s_1^2s_3^2(m^2+1)^2P_{s,u}^2Q_{s,u}^2$
(resp.\  $O_{t,v}:=2^6t_1^2t_3^2(m^2+1)^2P_{t,v}^2Q_{t,v}^2$) with
\begin{align}
P_{w,1}&=(w_1w_3+w_1m+w_3m-1)d_1-(w_1w_3-w_1m-w_3m-1)d_2, \label{eq:trans1}\\ 
Q_{w,1}&=(w_1w_3m+w_1-w_3+m)d_1d_2-(w_1w_3m-w_1+w_3+m), \label{eq:trans2}\\ 
P_{w,2}&=(w_1w_3m-w_1-w_3-m)d_1+(w_1w_3m+w_1+w_3-m)d_2, \label{eq:trans3}\\ 
Q_{w,2}&=(w_1w_3-w_1m+w_3m+1)d_1d_2+(w_1w_3+w_1m-w_3m+1), \label{eq:trans4} 
\end{align}
for $w\in\left\{s,t\right\}$.
The common factor of $O_{s,u}$ and $O_{t,v}$ depends if either $M_{u,v}=0$ or $N_{u,v}=0$ holds. 

All in all this yields a closed form solution of the problem at hand. 
One can solve $S_u$ for $s_2$, $T_v$ for $t_2$ and $M_{u,v}$ or $N_{u,v}$ for one of the variables $s_1,s_3,t_1,t_3$. 

Moreover, one gets also the explicit expression of $d_2$ from the common factor of $O_{s,u}$ and $O_{t,v}$. 
Note that the equations (\ref{eq:trans1}--\ref{eq:trans4}) also show that the three faces $f_1,f_2,f_3$ 
possess two flat configurations.

\subsection{Result}

Based on this preparatory work we can prove the following surprising theorem illustrated in Fig. \ref{fig2}a:

\begin{theorem}
The cone $\Lambda$ equals one cap of a Bricard octahedron of the plane-symmetric type. 
\end{theorem}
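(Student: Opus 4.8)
The plan is to exploit the closed-form relations already derived, namely the two independent one-parameter families of planar-slice conditions $S_uS_v$--type factors together with the linear-fractional relation between the dihedral angle parameters $d_1$ and $d_2$ encoded in equations~(\ref{eq:trans1}--\ref{eq:trans4}). The key observation is that a Bricard octahedron of the plane-symmetric type is characterized by the existence of a symmetry plane that is preserved throughout the flexion, so the proof reduces to exhibiting such a plane for the cone $\Lambda$ and verifying that the three faces $f_1,f_2,f_3$ together with their mirror images close up into a flexible octahedron. First I would fix one of the solution branches, say $M_{u,v}=0$ together with $S_u=0$ and $T_v=0$, solve these for $s_2,t_2$ and one of $s_1,s_3,t_1,t_3$ as indicated at the end of the previous subsection, and substitute into the edge directions $A_i,B_i$ to obtain explicit coordinates of the developed and folded configuration.

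Next I would identify the plane of symmetry. The natural candidate is the plane spanned by the two rotation axes $r_1$ and $r_2$, or a bisecting plane thereof; concretely, I expect the relation $M_{u,v}=0$ (or $N_{u,v}=0$) to force a reflective relationship among the angles $\sigma_i$ and $\tau_i$ about this plane. I would then construct the full octahedron by reflecting the cap $f_1,f_2,f_3$ across the candidate symmetry plane through the vertex $V$, producing the opposite cap, and check that the two caps share the correct boundary so that the union is a closed octahedral surface with the three pairs of opposite vertices required by Bricard's classification. The compatibility of the dihedral-angle motion is guaranteed by the linear-fractional transformation between $d_1$ and $d_2$ read off from the common factor of $O_{s,u}$ and $O_{t,v}$: this transformation must coincide with the flexion law of the plane-symmetric Bricard octahedron, and in particular the two flat configurations noted after equation~(\ref{eq:trans4}) should match the two flat poses of such an octahedron.

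The final step is to confirm that the symmetry is genuinely plane-symmetric (type~2 in Bricard's list) rather than line-symmetric or of the general flexible type. I would verify that the reflection is an isometry carried along by $\iota$ for every value of the deformation parameter, i.e.\ that the symmetry plane itself moves rigidly with the fixed face $f_2$ while the two caps remain mirror images at each instant. This amounts to checking that the folded positions $A_1^\ast,A_3^\ast$ and $B_1^\ast,B_3^\ast$ respect the reflection for all admissible $d_1,d_2$ satisfying the derived relation, which follows from the $s_2$- and $t_2$-independence of the factors $M_{u,v},N_{u,v}$ and from the involutive structure of the linear-fractional map.

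The main obstacle I anticipate is the last identification step: translating the purely algebraic solution into the synthetic statement that $\Lambda$ is \emph{exactly} one cap of a \emph{plane-symmetric} Bricard octahedron, as opposed to merely admitting \emph{some} plane of symmetry. The subtlety is that the cone's flexion and the octahedron's flexion must agree as one-parameter motions, not just at isolated configurations; establishing this requires showing that the single linear-fractional relation between $d_1$ and $d_2$ is precisely the one dictated by Bricard's plane-symmetric closure condition. I expect this to hinge on matching the coefficients in equations~(\ref{eq:trans1}--\ref{eq:trans4}) against the known angular relations of plane-symmetric Bricard octahedra, and the bookkeeping distinguishing the four branches $M_{1,1},N_{1,1},M_{1,2},N_{1,2}$ will be the delicate part, since these branches correspond to different reflective pairings of the edges.
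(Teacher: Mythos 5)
There is a genuine gap: your proposal never establishes that the cone \emph{closes up}, which is the actual content of the theorem. The paper's argument runs through the cross-section polygon: using the closed form solution one verifies that $a_1$ and $a_3$ (resp.\ $b_1$ and $b_3$) are mirror images with respect to the plane $\omega$ through $V$ orthogonal to the intersection line of $\alpha$ and $\beta$; iterating this over consecutive face triples gives the same symmetry for every pair $a_i$, $a_{i+2}$, which forces $a_j=a_{j+4}=a_{j+8}=\ldots$\,. Hence the cross-section polygon is an anti-parallelogram traversed multiple times, the cone has only four distinct faces, and it is therefore a pyramid over an anti-parallelogram, i.e.\ one cap of a plane-symmetric Bricard octahedron. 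Your proposal skips this closure step entirely: you treat $f_1,f_2,f_3$ as if they already constituted a four-faced cap, reflect it to manufacture the opposite cap, and then try to match the $d_1$--$d_2$ flexion law against Bricard's closure condition. Without the iteration argument there is no reason the discretized cone should have only four rulings at all, so the object you are reflecting need not be a cap of anything.

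Two further points. First, your candidate symmetry plane (the plane spanned by $r_1$ and $r_2$, or a bisector of these axes) is not the relevant one; the plane that does the work is $\omega$ as above, and the symmetry it encodes is between edges \emph{two apart} in the cross-section polygon, not between the two caps of an octahedron. Second, the effort you anticipate spending on matching the linear-fractional relation of equations (\ref{eq:trans1}--\ref{eq:trans4}) to Bricard's flexion law is unnecessary in the paper's route: once the cone is identified as a pyramid over an anti-parallelogram, the statement about its shape is already proved, and the given one-parametric isometric deformation is the flexion. The computational branch bookkeeping over $M_{u,v}$, $N_{u,v}$ that you flag as delicate is indeed where the Maple verification of the mirror symmetry lives, but the decisive idea you are missing is the periodicity $a_j=a_{j+4}$.
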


\begin{proof}
Using the closed form solution one can easily check by direct computations (e.g.\ with Maple), that the lines 
$a_1$ and $a_3$ (resp.\ $b_1$ and $b_3$) are mirror symmetric with respect to the plane $\omega$, which passes through 
the vertex $V$ of the cone $\Lambda$ and is orthogonal to the intersection line of $\alpha$ and $\beta$. 

By means of iteration we see that this property also has to hold true for $a_i$ and $a_{i+2}$ (resp.\ $b_i$ and $b_{i+2}$)
for $i=1,\ldots, n$. This implies that $a_j=a_{j+4}=a_{j+8}=\ldots$ (resp.\  $b_j=b_{j+4}=b_{j+8}=\ldots$) 
for $j=1,\ldots,4$, thus the polygon closes and forms an anti-parallelogram, which is passed multiple times (depending on $n$). 
This finishes already the proof of the theorem.
\hfill$\BewEnde$
\end{proof}

Note that the first face $f_1$ and the last face $f_n$  can be extended in length as no conditions on the opening angles of these two 
faces are implied by the iteration procedure. They can also be rotated by $\pi$ about the lines $r_1$ and $r_{n-1}$, 
respectively\footnote{This corresponds to change of the opening angle's sign.}.

\begin{figure}[t]
\begin{center}
\begin{overpic}
    [height=40mm]{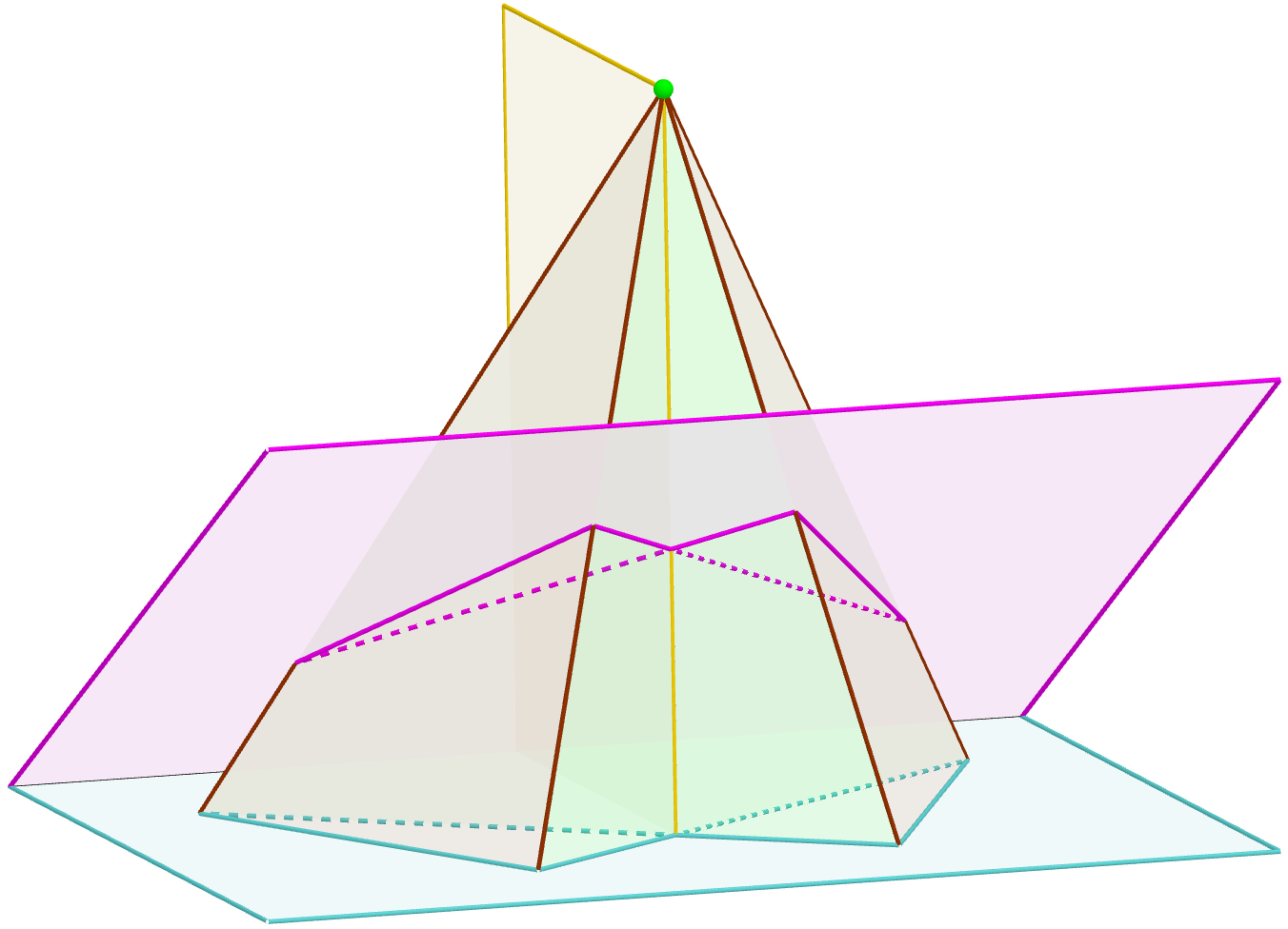}
\begin{scriptsize}
\put(0,0){a)}
\put(53,64){$V$}
\put(40,66.5){$\omega$}
\put(92,38.5){$\alpha$}
\put(88,7){$\beta$}
\put(25.5,3){$b$}
\put(35,28){$a$}
\end{scriptsize}     
  \end{overpic} 
\qquad
\begin{overpic}
    [height=40mm]{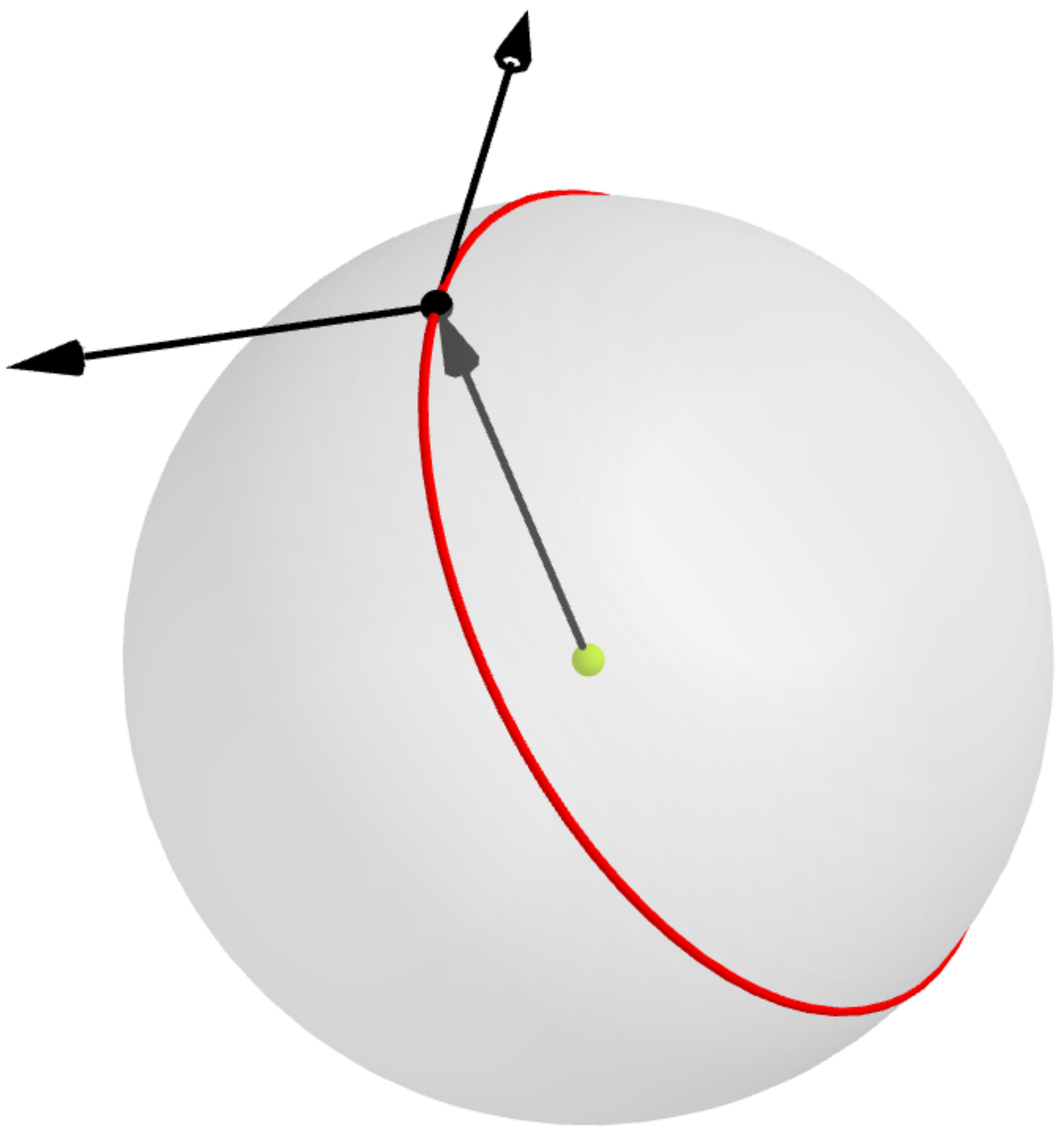}
\begin{scriptsize}
\put(0,0){b)}
\put(54,40){$V$}
\put(40,30){$\go c_{\eta}$}
\put(45.5,59){$\Vkt e_1$}
\put(37,88){$\Vkt e_2$}
\put(11,72){$\Vkt e_3$}
\end{scriptsize}     
  \end{overpic} 
\end{center}	
\caption{
(a) Cap of a Bricard octahedron of the plane-symmetric type.
(b) Sketch of the spherical curve $\go c_{\eta}$ and the Darboux frame $\Vkt e_1, \Vkt e_2,\Vkt e_3$.}
  \label{fig2}
\end{figure}

\section{Smooth conical case}\label{sec:smooth}


The discussion of the smooth analog has to based on euclidean differential geometry of curves and cones (e.g.\ \cite{pottmann}). 

We consider an 1-parametric isometric deformable cone $\Lambda_{\eta}$, where $\eta$ denotes the deformation parameter, with vertex in the origin. 
This cone can be parametrized by $\lambda\Vkt e_1(\varsigma,\eta)$ with $\lambda\in\RR$ and $\Vkt e_1$ describes a 1-parametric family of 
spherical curves $\go c_{\eta}$ on the unit sphere; i.e. $\|\Vkt e_1\|=1$. Moreover, due to the isometric deformation we can assume without loss of 
generality that for each curve $\go c_{\eta}$ the parameter $\varsigma$ represents its arc length; i.e.\ $\|\Vkt e_1'\|=1$ where the prime indicates 
the derivative with respect to $\varsigma$. Then one can associate the Cartesian frame $\Vkt e_1$, $\Vkt e_2:=\Vkt e_1'$, and $\Vkt e_3:=\Vkt e_1\times \Vkt e_2$ known as Darboux frame 
with each curve $\go c_{\eta}$. The derivatives of these vectors read as follows \cite[Eq.\ (5.68)]{pottmann}:
\begin{equation}
\Vkt e_1'= \Vkt e_2, \quad \Vkt e_2'=-\Vkt e_1+\kappa\Vkt e_3, \quad \Vkt e_3'=-\kappa\Vkt e_2,
\end{equation}
where $\kappa(\varsigma,\eta)$ is the geodesic curvature of $\go c_{\eta}$. Note that $\kappa$ determines the cone $\Lambda_{\eta}$ uniquely up to Euclidean motions due to the 
fundamental theorem of Euclidean differential geometry of cones \cite[Thm.\ 5.3.13]{pottmann}. 

We can parametrize  curves $\go p_{\eta,i}$ on the cone $\Lambda_{\eta}$ by $\Vkt p_i(\varsigma,\eta):=\phi_i(\varsigma)\Vkt e_1(\varsigma,\eta)$. 
Note, that due to the isometric deformation $\phi_i$ only depends on $\varsigma$ and not on $\eta$.
As $\go p_{\eta,i}$ is not arc length parametrized, its torsion has to be computed according to the well-known formula
\begin{equation}\label{eq:torsion}
\tau_i(\varsigma,\eta):=\frac{(\Vkt p_i'\times \Vkt p_i'')\Vkt p_i'''}{\|\Vkt p_i'\times \Vkt p_i''\|^2}.
\end{equation}
The curve $\go p_{\eta,i}$ is  planar  if and only if $\tau_i(\varsigma,\eta)=0$ holds thus we get as condition 
\begin{equation}\label{eq:must}
(\Vkt p_i'\times \Vkt p_i'')\Vkt p_i'''=0.
\end{equation}
We substitute the expressions 
\begin{equation}
\Vkt p_i'=\phi_i'\Vkt e_1 + \phi_i\Vkt e_1', \quad 
\Vkt p_i''=\phi_i''\Vkt e_1+2\phi_i'\Vkt e_1'+\phi_i\Vkt e_1'', \quad
\Vkt p_i'''=\phi_i'''\Vkt e_1+3\phi_i''\Vkt e_1' +3\phi_i'\Vkt e_1'' +\phi_i\Vkt e_1'''
\end{equation}
under consideration of 
\begin{equation}
\Vkt e_1'=\Vkt e_2, \quad \Vkt e_1''=-\Vkt e_1+\kappa\Vkt e_3, \quad \Vkt e_1'''=\kappa'\Vkt e_3 -(1+\kappa^2)\Vkt e_2
\end{equation}
into Eq.\ (\ref{eq:must}) which yields $K_i=0$ with
\begin{equation}\label{eq:diff_fin}
K_i:=(\phi_i^2 - \phi_i\phi_i'' + 2\phi_i'^2)\phi_i\kappa' 
+ 
(\kappa^2\phi_i^2\phi_i' + \phi_i^2\phi_i' + \phi_i^2\phi_i''' - 6\phi_i\phi_i'\phi_i'' + 6\phi_i'^3)\kappa.
\end{equation}

\begin{remark}
By assuming that the cross-ratio $(\go p_1,\go p_2,\go p_i,V)$ is constant $\lambda_i\in\RR$ 
we can set $\phi_i=\tfrac{\phi_1\phi_2(\lambda_i-1)}{\lambda_i\phi_1-\phi_2}$ in $K_i=0$. 
Then $(\lambda_i-1)^3$ factors away and only an equation of the form $\phi_1^4K_2\lambda-\phi_2^4K_1=0$ remains. 
This shows again that two solutions, which are given by $K_1=0$ and $K_2=0$, imply a 1-parametric set of solutions. 
\hfill $\diamond$
\end{remark}

This differential equation $K_i=0$ 
can be solved (e.g.\ with Maple), 
which yields beside two degenerate solutions (either $\kappa$ or $\phi_i$ is zero)  the following general one:
\begin{equation}\label{eq:kappa}
\kappa=
\frac{
\Phi_i}
{
\phi_i^3\sqrt{W_i}}\quad\text{with}\quad \Phi_i:=\phi_i^2 - \phi_i\phi_i'' + 2\phi_i'^2
\end{equation}
and
\begin{equation}
\begin{split}
W_i:=&
2\int \frac{\phi_i'}{\phi_i\Phi_i} d\varsigma + 
8\int \frac{\phi_i'^3}{\phi_i^3\Phi_i} d\varsigma + 
8\int \frac{\phi_i'^5}{\phi_i^5\Phi_i} d\varsigma -
4\int \frac{\phi_i'\phi_i''}{\phi_i^2\Phi_i} d\varsigma \\
&+ 2\int \frac{\phi_i'\phi_i''^2}{\phi_i^3\Phi_i} d\varsigma 
- 8\int \frac{\phi_i'^3\phi_i''}{\phi_i^4\Phi_i} d\varsigma
+ I(\eta)
\end{split}
\end{equation}
where $I(\eta)$ stands for an arbitrary function in $\eta$. 
As now the function $\kappa$ has to be the same for $i=1,2$ this implies the condition 
$\Phi_2\phi_1^3\sqrt{W_1}=\Phi_1\phi_2^3\sqrt{W_2}$, where we take both sides to the power of two in order to get rid of the square roots; i.e.
\begin{equation}
\Phi_2^2\phi_1^6W_1-\Phi_1^2\phi_2^6W_2=0, 
\end{equation}
which has the structure $U_1(\varsigma)I(\eta) + U_0(\varsigma)=0$. As this condition has to be fulfilled for all values $\eta$ we end up with the 
conditions $U_1(\varsigma)=0$ and $U_0(\varsigma)=0$, which form a system of ordinary differential equations. 
The first condition $U_1(\varsigma)=0$ reads as
\begin{equation}
\Phi_1^2\phi_2^6-\Phi_2^2\phi_1^6=0
\end{equation}
and can easily be solved for $\phi_1$, which yields 
\begin{equation}
\phi_1=\frac{\phi_2}{\phi_2(C_1\sin{\varsigma}-C_2\cos{\varsigma})\pm 1}
\end{equation}
with some constants $C_i\in\RR$. 
If one plugs this expression into $U_0(\varsigma)=0$, one ends up with a final differential equation. 
Unfortunately, we are not able to solve it.

\section{Conclusion}\label{sec:conclusion}

We studied in the smooth and the discrete setting the  geometric/kinematic problem of cones and cylinders, respectively, possessing 
a 1-parametric isometric deformation which preserves the planarity of two curves, which are located in non-parallel planes. 
The cylindrical case turns out to be simple in contrast to the conical one, which we were able to solve in the discrete 
setting by showing that the cones correspond to caps of Bricard octahedra of the plane-symmetric type. 
For the smooth case we were able to solve the resulting system of partial 
differential equations symbolically up to a final ordinary differential equation. Its solution remains an open problem, 
but if there exists one then it would imply a continuous flexible semi-discrete suspension \cite{alexandrov}.

\begin{acknowledgement}
The research is supported by grant  F77 (SFB ``Advanced Computational Design'', SP7) of the Austrian Science Fund FWF.  
\end{acknowledgement}

\section{Appendix}

\subsection{Discrete cylindrical case}\label{sec:discrete_cylinder}

The cylindrical case can be studied similarly to the conical one discussed in Section \ref{sec:discrete} but under consideration 
of $\mu=0$ ($\Rightarrow$ $m=0$). Now the resultant of $D_1$ and $D_2$ with respect to $d_1$ is of the form $E_2d_2^2+E_0$. 
Then the resultant of $E_2$ and $E_0$ with respect to $s_2$ implies the following condition 
\begin{equation}
Gt_2^{12} (t_1-t_3)^2(t_1t_3+1)^2(t_1+t_3)^2(t_1t_3-1)^2R_3^8
\end{equation}
with $G$ of Eq.\ (\ref{eq:G}) and
\begin{equation}\label{eq:ri}
R_i:=s_1^2s_it_1t_i^2 - s_1s_i^2t_1^2t_i - s_1^2s_it_1 + s_1s_i^2t_i + s_1t_1^2t_i - s_it_1t_i^2 - s_1t_i + s_it_1.
\end{equation}
We only have to consider the case $R_3=0$ as the remaining factors to the power of two end up in cases where the faces $f_1$ and $f_3$ are parallel (as   
$a_1$ and $a_3$ are parallel).  

The resultants $Res(R_3,E_2,t_3)$ and  $Res(R_3,E_0,t_3)$ yield the same expression; namely
\begin{equation}
 s_3^4t_1^4(s_1-s_3)^2(s_1s_3+1)^2(s_1+s_3)^2(s_1s_3-1)^2R_2^4
\end{equation}
with $R_i$ of Eq.\ (\ref{eq:ri}). Again we only have to consider $R_2=0$ as the remaining factors to the power of two end up in the trivial cases 
of parallel faces $f_1$ and $f_3$ (as $b_1$ and $b_3$ are parallel).  

For example one can solve $R_i$ for $t_i$ which yield for each  $i=2,3$ two solutions ($\Rightarrow$ four combinations). 
Then $D_1$ is contained as a factor in $D_2$ and we are done. 


\subsection{Smooth cylindrical case}\label{sec:smooth_cylinder}

We consider an 1-parametric isometric deformable cylinder $\Gamma_{\eta}$, where $\eta$ denotes again the deformation parameter. 
This cylinder can be parametrized by $\Vkt c(\varsigma,\eta)+\lambda\Vkt e_3$ with $\lambda\in\RR$ and $\Vkt c$ describes a 1-parametric family of 
planar curves $\go c_{\eta}$ in a plane orthogonal to the rulings direction $\Vkt e_3$ with $\|\Vkt e_3\|=1$. 

Moreover, due to the isometric deformation we can assume without loss of 
generality that for each curve $\go c_{\eta}$ the parameter $\varsigma$ represents its arc length; i.e.\ $\|\Vkt c'\|=1$ where the prime indicates 
the derivative with respect to $\varsigma$. Then one can associate the Cartesian frame $\Vkt e_1:=\Vkt c'$, $\Vkt e_2:=\Vkt e_3\times \Vkt e_1$ and $\Vkt e_3$ 
with each curve $\go c_{\eta}$. The derivatives of these vectors read as follows:
\begin{equation}
\Vkt e_1'= \kappa\Vkt e_2, \quad \Vkt e_2'=-\kappa\Vkt e_1, \quad \Vkt e_3'= \Vkt o,
\end{equation}
where $\kappa(\varsigma,\eta)$ is the curvature of the planar curve $\go c_{\eta}$. Note that $\kappa$ determines the cylinder $\Gamma_{\eta}$ uniquely up to Euclidean motions due to the 
fundamental theorem of Euclidean differential geometry of planar curves. 

Moreover, we can parametrize the curves $\go p_{\eta}$ on the cylinder $\Gamma_{\eta}$ by $\Vkt p(\varsigma,\eta):=\Vkt c(\varsigma,\eta) + \phi(\varsigma)\Vkt e_3$. 
Note, that due to the isometric deformation $\phi$ only depends on $\varsigma$ and not on $\eta$ and $\Vkt e_3$ is constant. 
Again the torsion $\tau$ of $\go p_{\eta}$ has to be computed like in Eq.\ (\ref{eq:torsion}) thus the condition for planarity reads as 
$(\Vkt p'\times \Vkt p'')\Vkt p'''=0$. Within this expression we substitute
\begin{equation}
\Vkt p'=\Vkt e_1 + \phi'\Vkt e_3, \quad 
\Vkt p''=\Vkt e_1' + \phi''\Vkt e_3, \quad
\Vkt p'''=\Vkt e_1'' + \phi'''\Vkt e_3
\end{equation}
under consideration of 
\begin{equation}
\Vkt e_1'=\kappa\Vkt e_2, \quad \Vkt e_1''=\kappa'\Vkt e_2-\kappa^2\Vkt e_1,
\end{equation}
which yields:
\begin{equation}\label{check_substi}
\phi'\kappa^3+\phi'''\kappa-\phi''\kappa'=0.
\end{equation}
\begin{remark}
If we substitute $\phi$ by $\lambda\phi$ with $\lambda\in\RR$ in Eq.\ (\ref{check_substi}) then we see that 
we remain with the same equation just multiplied with the factor\footnote{$\lambda=0$ yields the planar curve in the $e_1e_2$-plane.} $\lambda$. 
\hfill $\diamond$
\end{remark}

This differential equation (\ref{check_substi})
can be solved (e.g.\ with Maple), 
which yields the solution
\begin{equation}\label{eq:kappa_planar}
\kappa=\frac{\phi''}{\sqrt{I(\eta)-\phi'^2}}
\end{equation}
where $I(\eta)$ denotes again an arbitrary function in $\eta$. 

\end{document}